\theoremstyle{plain}			
\newtheorem{thm}{Theorem}[section]
\newtheorem{lemma}[thm]{Lemma}
\newtheorem{prop}[thm]{Proposition}
\newtheorem{cor}[thm]{Corollary}
\theoremstyle{definition}		
\newcommand{\bpic}[4]{\beginpicture
\setcoordinatesystem units  <1pt,1pt>
\setplotarea x from #1 to #2, y from #3 to #4}
\newcommand{\epic}{\endpicture}
\newcommand{\hl}[3]{\put{\line(1,0){#1}} [Bl] at #2 #3 }
\newcommand{\vl}[3]{\put{\line(0,1){#1}} [Bl] at #2 #3 }
\newcommand{\bull}[2]{\put{$\bullet$} at #1 #2 }
\renewcommand\@biblabel[1]{}
\begin{document}

\title{\Large\bf  Electoral Competition under Best-Worst Voting Rules}

\author{Dodge Cahan\footnote{Department of Economics, University of California, San Diego, USA (dcahan@ucsd.edu).} \quad and \quad   Arkadii Slinko\footnote{Department of Mathematics, University of Auckland, Auckland, New Zealand (a.slinko@auckland.ac.nz).}}

\date{August 2016}

\maketitle
\thispagestyle{empty}


\begin{abstract}
We characterise multi-candidate pure-strategy equilibria in the Hotelling-Downs spatial election  model for the class of best-worst voting rules, in which each voter is endowed with both a positive and a negative vote, i.e., each voter can vote in favour of one candidate and against another one. The weights attached to   positive and  negative votes in calculating a candidate's net score may be different, so that a negative vote and a positive vote need not cancel out exactly. These rules combine the first-place seeking incentives of plurality with the incentives to avoid being ranked last of antiplurality.
We show that these rules generally admit  equilibria, which are nonconvergent if and only if the importance of a positive vote exceeds that of a negative vote. The set of equilibria in the latter case is very similar to that of plurality, except that the platforms are less extreme due to the moderating effect of negative votes. Moreover, any degree of dispersion between plurality, at one extreme, and full convergence, at the other, can be attained for the correct choice of the weights.
\end{abstract}


\section{Introduction}
\label{intro}

Hotelling's \citeyearpar{hotelling} ``Main Street'' model of  spatial competition between firms has---most notably thanks to its adaptation by Downs \citeyearpar{downs} to ideological competition of political parties---enjoyed a significant presence in political science literature. 
In the classical model, there is a society of voters whose ideal policy platforms lie along the left-right political spectrum. A set of exogenously given political candidates or parties choose  platforms to advocate so as to maximise their support from the voters, who vote for the candidate with the platform nearest to his or her personal ideal platform.  

Most such studies of Downsian competition have focused on situations in which elections are held under the voting system known as plurality rule. This is the simplest system where voters have one vote each, which they cast for their favourite candidate, and whoever gets the most votes wins. Under plurality, voters' second, third and other preferences---and most importantly for this paper, their last preferences---do not matter.
However, voting systems, both  used in practice and  studied theoretically, are diverse. Many of them do take into account voters' partial or full ranking of candidates when producing a winner. These include, among others, approval voting, Borda count, and single transferable vote. When the preferences beyond first matter, candidates' incentives change, and we expect equilibrium outcomes to vary as well. In this paper, we analyze the equilibrium properties of a largely overlooked class of voting rules, which combine positive and negative voting, and are referred to as \textit{best-worst}  rules~\citep{garciamartinez}. Under these rules, voters  have both one positive and one negative vote, whose weights (in absolute value) are allowed to be different, while their intermediate preferences are inconsequential. Plurality, anti-plurality, and the system in which voters have one positive and one negative vote of equal absolute value,\footnote{Also known as ``single-positive-and-single-negative'' voting  \citep{myerson1}.} are special cases.

 The main result of this paper is that in the Hotelling-Downs model there is a close link between the  pure-strategy equilibria of general best-worst rules and those of plurality, which is well known to admit divergent equilibria in which candidates adopt a range of ideologically diverse positions \citep{eatonlipsey, denzaukatsslutsky}.   When the importance of a positive vote exceeds that of a negative vote, equilibria take the same general form as those of plurality, with divergent policy platforms   advocated. However, the key difference is that,  while differentiated, the equilibrium platforms for the best-worst rules are more moderate and less spread out. Indeed, these rules present candidates with a clear centrifugal motive to seek first-place rankings, as occurs under plurality, but with the   simultaneous  incentive to avoid being the most unpopular candidate and receiving negative votes. This last property encourages a degree of policy moderation---adopting extreme platforms is discouraged as doing so is likely to single oneself out as a target for the negative votes of voters at the opposite end of the ideological spectrum. As the importance of a negative vote increases relative to that of a positive vote, the equilibrium platforms move inwards towards the median voter's ideal platform. Eventually all platforms merge at the median as a negative vote reaches parity with  a positive vote (i.e., one negative vote cancels out one positive vote exactly). When  a negative vote becomes more important than a positive vote, only convergent equilibria exist, with no policy differentiation.

 Describing the equilibrium properties of different voting systems is an important task for several reasons. When choosing between voting rules, first of all, we would like to know whether or not equilibria exist---their absence may lead to permanent instability and a  lack of predictability of outcomes \citep{myerson1}. Second, if they exist, an electoral designer would prefer a rule that admits equilibria with desirable properties. 
 The  main consideration  here is a tradeoff between discouraging extremism and promoting fair representation---it is undesirable if extremist platforms have a better chance of attracting votes than more centrist platforms, while at the same time the rational for voting in the first place is to provide citizens with the ability to choose  parties or politicians that represent their varied interests. 
 
 Our results show that best-worst rules do well on both counts. They admit nonconvergent equilibria, offering voters with a choice over different platforms and avoiding Hotelling's ``excessive sameness''. At the same time, the perhaps excessive extremism associated with plurality \citep{cox1,cox85,myerson4,lasliermaniquet}   is moderated. Indeed, depending on the weight placed on a negative vote, we may have any level of dispersion of platforms between that of plurality, at one extreme, and full convergence of positions, at the other.  Moreover, they are simple and easily implementable.

The idea of voting against candidates has been around for some time. \cite{boehm} in an unpublished essay suggested that voters in an election be allowed either to cast a vote for or against a candidate, but not both. A candidate's ``negative'' votes would be subtracted from his ``positive'' votes to determine his net vote, and the candidate with the highest net vote would win.\footnote{Presumably, if the highest net vote is negative, then nobody is elected.}
\citeauthor{boehm}---and many others after him (see, e.g., \citealp{leef})---argued that the introduction of negative votes in United States presidential elections would force the candidates to appeal to  voters with positive programs, rather than just filling  the airwaves with ads attacking other candidates, sowing doubt among their supporters. The rule suggested by \citeauthor{boehm} is now known as negative voting \citep{brams}.   Anti-plurality voting is a similar method in which each voter votes against a single candidate, and the candidate with the fewest votes against wins. In other words, anti-plurality determines who among the candidates is the least unpopular.\footnote{Anti-plurality is also sometimes referred  to as negative voting as well as ``veto'' \citep{kang}.}

The use of some form of negative voting in elections is not so uncommon. For example, Nevada gives voters the option to vote against all candidates by having a ``None of these candidates'' option on the ballot. Prior to 2000, Lithuanian voters were allowed voters to express approval, neutrality or disapproval of candidates in the proportional representation part of their parliamentary elections \citep{renwickpilet}. The Senate of Spain uses a rule related to anti-plurality \citep{grofmanlijphart}---provinces elect four senators, and voters may vote for any three.\footnote{More accurately, this is closer to $k$-plurality, where voters have $k$ votes to cast, since the number of candidates could exceed four. Also, the Spanish method is more complicated in that it allows for abstention.}

 The possibility of casting both positive and negative votes, as occurs under best-worst rules, may be even more advantageous. First, the number of ways in which voters can express themselves is further diversified, which could increase  turnout  by appealing to those voters who are   enticed more by the ability to vote against a candidate than for one~\citep{kang,leef}. Second, it can give a fighting chance to major or minor centrist parties---it is not unthinkable that people on the extreme left will vote for a leftist candidate and against a right-wing one, while the right-wing voters will do the opposite. Their votes will cancel out and a centrist candidate will be elected.\footnote{See \cite{kang} for an account of more costs, benefits and tradeoffs associated with negative voting.}

The rest of this paper is organised as follows: in Section~\ref{rellit} we outline some literature related to this work; in Section \ref{themodel} we present the model; in Section \ref{results} we present our main results; Section \ref{conclusion} provides our concluding remarks. A few minor and auxiliary results are presented in the Appendix.

\section{Related literature}
\label{rellit}

Best-worst rules specifically and notions related to them have been considered in the literature before in other contexts. 
The idea that the best and worst alternatives play a special role in the decision process has been prominent in decision theory. For example, the Arrow-Hurwicz     \citeyearpar{arrowhurwicz} criterion for choice under uncertainty takes a weighted average of the best and worst expected value/utility outcomes and does not take into account intermediate outcomes, and \citet{marleylouviere}  look at probabilistic discrete choice models through the best-worst lense.

 \citet{garciamartinez} provide an axiomatic characterisation of the class of best-worst voting rules considered in this paper. \citet{alcantudlaruelle} characterise a related voting rule in which, for each candidate, voters may express approval, indifference, or disapproval. This rule is also studied in \citet{felsenthal} from the perspective of voter strategies. \citet{joymcmunigal} believe that the current system of peremptory challenges in the criminal justice system of the United States makes it easy to exclude qualified African Americans jurors in the process of jury selection and propose that it be replaced with a system of peremptory strikes and peremptory inclusions.  In other words, both the defense and the prosecution should be allowed to not just rule  potential jurors out, but also ``rule them in''. 

None of these papers, however, look at how the incentives created by these voting systems affect political competition.
Given the  very natural combination of  negative and positive voting embodied in the best-worst rules, it is surprising that, to the best of our knowledge, they have been overlooked in the spatial competition literature. There are, however, a few results related to some special cases of best-worst rules. Most notably, plurality has been extensively discussed, and its equilibrium properties are characterised in \citet{eatonlipsey} and \citet{denzaukatsslutsky}. Anti-plurality is known to allow convergent equilibria  in which all candidates adopt the same policy platform, but not to allow nonconvergent equilibria \citep{cox1}. 

The two most relevant papers to this research are~\cite{cox1} and~\cite{cahanslinko}. Both  are concerned with Nash equilibria under the class of voting rules known as general scoring rules, of which the best-worst rules are a subclass.   \citet{cox1} characterised all scoring rules that have 
convergent Nash equilibria, which leads to  a straightforward description of all best-worst rules allowing convergent equilibria, as we will describe in Section~\ref{results}.  However, Cox's theorem says nothing about the possibility of divergent equilibria, which is the focus of the paper \cite{cahanslinko} and also this paper. 

\citet{cahanslinko} investigate the existence and properties of nonconvergent equilibria under general scoring rules. In some subclasses of scoring rules---in particular, those whose score vector is convex---they managed to characterise all rules that allow Nash equilibria. These rules appear to be truncated variants of the Borda rule. This result is, however, inapplicable to the best-worst rules considered in this paper since their score vector is neither convex nor concave. 
A general characterisation of  scoring rules that allow equilibria remains an open question.


\section{The model}
\label{themodel}

There is a unit mass of consumers distributed uniformly on the interval [0,1], the issue space.\footnote{We do not actually need the distribution to be literally uniform. All we need is that the candidates believe this to be the case or take it as a simplifying assumption in their decision-making \citep{aragonesxefteris,cahanslinko}.}  There are $m$ candidates---candidate $i$'s position is $x_i$, and a strategy profile $x=(x_1,\ldots,x_m)\in [0,1]^m$ describes the platforms of all the candidates. A strategy profile implies a set of distinct occupied positions, $x^1<x^2<\ldots<x^q$. We denote by $n_i$ the number of candidates at occupied position $x^i$ and we will sometimes use the alternative notation for a strategy profile, $x=((x^1,n_1),\ldots,(x^q,n_q))$, which gives the location and number of candidates at each occupied position rather than each individual candidate's position. 

We will use notation $[n]=\{1,\ldots,n\}$ and if $I=[a,b]$ is an interval, then $\ell(I)=b-a$ is the length of the interval. We assume sincere voters with single-peaked, symmetric utility functions who, hence, rank candidates according to the distance between their advocated platform and the voter's ideal position. Voters who are indifferent between candidates decide on a strict ranking by  fair lottery.

A best-worst voting rule can be described as follows: a first-place ranking earns a candidate a normalised 1 point, while a last-place ranking earns the candidate $-c$ points, where $c\geq 0$. Being ranked anywhere other than first or last by a voter earns a candidate nothing. The magnitude of $c$ describes the relative importance of the positive vote relative to the negative vote, which is the parameter of interest here. Thus, a rule can be described by a pair of numbers $s=(c,m)$, where $m$ is the number of candidates.\footnote{As noted in the Introduction, best-worst rules belong to the class of general scoring rules. A scoring rule is a vector $s=(s_1,\ldots,s_m)$, where $s_1\geq \cdots \geq s_m$, $s_1>s_m$, and $s_i$ is the number of points assigned to the $i$-th ranked candidate in a voter's ballot. A best-worst rule $s=(c,m)$, then, is equivalent to scoring rule $s=(1,0,\ldots,0,-c)$.}

Candidate $i$'s score is the total number of weighted positive and negative votes received, denoted $v_i(x)$. Candidates choose positions simultaneously so as to maximise $v_i(x)$.\footnote{See Stigler's \citeyearpar{stigler} argumentation for this assumption and a discussion of it in \citet{denzaukatsslutsky}.} Our equilibrium concept is the Nash equilibrium in pure strategies. Profile $x^*=(x_1^*,\ldots,x_m^*)$ is an equilibrium if and only if  $v_i(x^*)\geq v_i(x_i,x_{-i}^*)$ for all $i\in [m]$ and for all $x_i\in [0,1]$, where $(x_i,x_{-i}^*)=(x_1^*,\ldots,x_{i-1}^*, x_i,x_{i+1}^*,\ldots,x_m^*)$. A convergent Nash equilibrium (CNE) is an equilibrium in which all candidates adopt the same platform, while in a non-convergent Nash equilibrium (NCNE), at least two of the platforms are distinct.

\section{Results}
\label{results}

Our main result is  a general characterisation of NCNE for rules $s=(c,m)$ in Theorem \ref{abb0general}.  Before we concentrate on NCNE, however, we should address the issue of CNE---equilibria in which all candidates adopt the same platform. In fact, their characterisation is straightforward, presented below in Proposition \ref{CNE}. This result is due to Cox  \citeyearpar{cox1}, who  characterised CNE for general scoring rules, a broad class of voting rules to which best-worst rules belong.  

\begin{prop}[\citealp{cox1}]\label{CNE} A rule $s=(c,m)$ admits CNE if and only if $c\geq 1$, in which case the profile $x=((x^1,m))$ is a CNE for any $x^1\in \left[\frac{m-1+c}{m(1+c)}, 1-\frac{m-1+c}{m(1+c)}\right]$.\end{prop}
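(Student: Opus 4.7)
The plan is to verify the Nash condition directly via a symmetry computation for the equilibrium score and an envelope-style calculation for unilateral deviations. By the fair-lottery tie-breaking rule and the symmetry of the co-located profile, every voter ranks each of the $m$ candidates first with probability $1/m$ and last with probability $1/m$, so integrating over the unit mass of voters each candidate's equilibrium score is $v_i(x^*) = (1-c)/m$.

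For a unilateral deviation by candidate $i$ to some $x_i \neq x^1$, I would take $x_i < x^1$ without loss of generality (the rightward case is symmetric) and set $p = (x_i + x^1)/2$. The mass $p$ of voters to the left of $p$ strictly prefer $i$ to every other candidate and rank $i$ first; the mass $1-p$ to the right strictly prefer each other candidate to $i$ and rank $i$ last. Thus the deviator's score is $p(1+c)-c$, a function strictly increasing in $x_i$ on $[0,x^1)$ with supremum $x^1(1+c)-c$ approached as $x_i \to x^1$ from below; the analogous rightward calculation yields supremum $(1-x^1)(1+c)-c$. The main point requiring care is that this supremum is a one-sided limit not attained at any feasible $x_i$ (at $x_i=x^1$ the score jumps back to $(1-c)/m$), which is precisely what makes the weak inequality $\le$ in the Nash definition produce a \emph{closed} interval of equilibria.

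Imposing $v_i(x^*) \ge v_i(x_i,x^*_{-i})$ for all $x_i$ thus reduces to the two conditions $x^1(1+c)-c \le (1-c)/m$ and $(1-x^1)(1+c)-c \le (1-c)/m$; routine rearrangement converts these into exactly $x^1 \in \left[\frac{m-1+c}{m(1+c)},\, 1-\frac{m-1+c}{m(1+c)}\right]$. Finally, this interval is nonempty iff $\frac{m-1+c}{m(1+c)} \le \tfrac{1}{2}$, which simplifies to $(m-2)(c-1) \ge 0$ and, for the relevant case $m \ge 3$, is equivalent to $c \ge 1$, establishing the biconditional.
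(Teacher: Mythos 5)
Your proposal is correct and follows essentially the same route as the paper: compute the co-located score $(1-c)/m$, observe that the best deviations are one-sided limits just left or right of $x^1$ with payoffs $x^1(1+c)-c$ and $(1-x^1)(1+c)-c$, and rearrange the two resulting inequalities into the stated closed interval, which is nonempty exactly when $c\geq 1$. Your explicit monotonicity argument for deviations and the remark that equivalence to $c\geq 1$ requires $m\geq 3$ (for $m=2$ the interval degenerates to the median for all $c$) are slightly more careful than the paper's terse proof, but the substance is identical.
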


\begin{proof} 
 For $x=((x^1,m))$ to be a CNE, it should not be beneficial to deviate just to the left or right of $x^1$. That is, we have CNE if and only if: first,
 $v_i(x^{1-},x_{-i})=x^1-c(1-x^1)\leq \frac{1-c}{m}=v_i(x)$; and, second, $v_i(x^{1+},x_{-i})=1-x^1-cx^1\leq  v_i(x)$. Together, these two conditions  yield  the interval of possible CNE, which is nonempty if and only if $c\geq 1$.
\end{proof}

Proposition \ref{CNE} tells us that CNE can only exist if the weight on a positive vote does not exceed that of a negative vote. In this case, deviating differentiates a candidate in a positive way for one side of the electorate, and negatively for the other. The   gain in terms of positive votes is not worth the damage due to the negative votes that the candidate will now receive, so candidates will stay put at the CNE platform. If $c\geq 1$, then CNE exist at any point of an interval centered at the median voter's ideal position. As $c$ increases, this interval expands, meaning that a wider range of CNE are possible.\footnote{Provided $m>2$; if $m=2$, any rule reduces to plurality.}

While Proposition \ref{CNE} tells us everything there is to know about CNE, it is silent about NCNE. We do know that NCNE exist for plurality \citep{eatonlipsey}, but not for antiplurality \citep{cox1}, both of which are examples of best-worst rules, so the picture is not at all clear in general. 

It turns out that, for NCNE to exist, it must be that $c<1$. In other words,  the value of a positive vote must outweigh the value of a negative vote in order for the candidates to be induced to adopt divergent policies, otherwise the centripetal incentive to avoid being singled out as the worst candidate is too strong and only CNE can exist. This also implies that CNE and NCNE cannot exist simultaneously for the same rule.\footnote{They can coexist for other scoring rules \citep{cahanslinko} outside the class of best-worst rules.}

\begin{prop}\label{BRonly}  The rule $s=(c,m)$ does not admit NCNE if $c\geq 1$.
\end{prop}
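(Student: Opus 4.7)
The plan is to assume for contradiction that $x^*$ is an NCNE for some $c \geq 1$, and then to exhibit a candidate who can strictly improve by relocating. Let $x^1 < \cdots < x^q$ ($q \geq 2$) be the occupied positions with multiplicities $n_1,\ldots,n_q$, and write $v_i$ for the common score of each candidate at $x^i$. The first preliminary observation is geometric: for any voter $y$, the farthest occupied position is necessarily an extreme, $x^1$ or $x^q$, because any interior $x^i$ with $1<i<q$ satisfies $|y-x^i| < \max\{|y-x^1|,|y-x^q|\}$. Hence negative votes are received only by candidates at $x^1$ and $x^q$, and every candidate at an interior position has a strictly positive score (just a share of its catchment interval).

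Next I would apply a vote-counting identity: since each voter contributes one positive point and $c$ negative points, the total score across all candidates equals $1-c \leq 0$. Subtracting off the strictly positive scores of the interior candidates yields $n_1 v_1 + n_q v_q \leq 1-c \leq 0$, so at least one of $v_1, v_q$ is non-positive. By left-right symmetry I may assume $v_1 \leq 0$.

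The crux is constructing a profitable deviation for a candidate at $x^1$. The idea is to relocate into a position where the deviator becomes interior in the new profile and therefore escapes negative votes entirely. When $q \geq 3$, moving to the midpoint of the gap $(x^2,x^3)$ works whether $n_1 \geq 2$ (in which case $x^1$ remains occupied) or $n_1 = 1$ (in which case $x^2$ becomes the new leftmost): either way the deviator sits strictly between the new leftmost and rightmost positions, receives no negative votes, and gains positive-vote mass $(x^3-x^2)/2 > 0 \geq v_1$. When $q = 2$ and $n_1 \geq 2$, the analogous move to the midpoint of $(x^1,x^2)$ yields score $(x^2-x^1)/2 > 0 \geq v_1$ by the same argument.

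The main obstacle is the remaining subcase $q = 2$ with $n_1 = 1$, where any relocation within $(x^1,x^2)$ leaves the deviator as the leftmost extreme and the interior-candidate trick fails. Here I would fall back on a direct score computation: moving from $x^1$ to any $x' \in (x^1,x^2)$ changes the deviator's score by exactly $(1+c)(x'-x^1)/2 > 0$, since the gain in positive-vote mass and the reduction in negative-vote mass both point in the same direction. Thus the deviator strictly improves even while remaining at an extreme, and in every branch the equilibrium assumption is contradicted.
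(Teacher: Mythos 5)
Your proof is correct, but it takes a genuinely different route from the paper's. The paper argues locally: using Lemma~\ref{noextremepositions} (so that $n_1,n_q\geq 2$ and $x^1>0$, $x^q<1$), it considers an infinitesimal inward move by a candidate at $x^1$, which costs her the share $\frac{1}{n_1}\ell([0,x^1])$ of positive votes but sheds the share $\frac{c}{n_1}\ell([(x^1+x^q)/2,1])$ of negative votes; writing the no-deviation inequality at both extremes and using $c\geq 1$ forces $x^1\geq x^q$, a contradiction. You instead argue globally: the accounting identity $\sum_i v_i(x)=1-c\leq 0$, combined with the observation that only candidates at $x^1$ or $x^q$ can ever be ranked last (so interior candidates' scores are strictly positive), shows that some extreme-position candidate has non-positive score; you then exhibit a discrete profitable deviation into the interior of an adjacent gap (score $(x^3-x^2)/2>0$ or $(x^2-x^1)/2>0$), with the one residual configuration $q=2$, $n_1=1$ handled by the direct computation that the gain from moving inward is $(1+c)(x'-x^1)/2>0$. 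I checked your case analysis and score computations and they are sound. The paper's argument is shorter given its appendix lemma and stays within the marginal-deviation toolkit used throughout; yours is self-contained (it never needs $n_1,n_q\geq 2$, treating lone extreme candidates directly), and the identity $\sum_i v_i=1-c$ gives a clean structural explanation of why $c\geq 1$ rules out divergence: someone must be ``in the red,'' and that is never stable because the interior is a refuge from negative votes.
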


\begin{proof} Consider candidate 1 at position $x^1$, which is occupied by $n_1$ candidates, where $2\leq n_1\leq m-2$. Consider intervals $I_1=[0,x^1]$ and $I_2=[(x^1+x^q)/2,1]$.   If 1 makes an infinitesimal move to the right of $x^1$, then in the rankings of voters in $I_1$, of which there is positive measure by Lemma \ref{noextremepositions}, she falls behind the other $n_1-1$ candidates originally at $x^1$, thus losing their positive votes. On the other hand, 1 rises ahead of these $n_1-1$ candidates in the rankings of all other voters and, in particular, no longer receives a negative vote from any voter. Then, the score candidate 1 loses by making this move  is $v_{lost}=\frac{1}{n_1}\ell(I_1)$. On the other hand, 1's gain from this move is $v_{gained}=\frac{1}{n_1}c\ell(I_2).$

For NCNE, it must be the case that $v_{lost}\geq v_{gained}$, or $\ell(I_1)\geq c\ell(I_2)$. Since we assume $c\geq 1$, this implies that $\ell(I_1)\geq \ell(I_2)$, or $x^1\geq 1-(x^1+x^q)/2$.
  Similar considerations with respect to candidate $q$ yields the requirement that $\ell([x^q,1])\geq \ell([0,(x^1+x^q)/2]$, or $1-x^q\geq (x^1+x^q)/2$. Together, these two conditions imply that $x^1\geq x^q$, an impossibility for an NCNE.
\end{proof}

Before we proceed to our characterisation,   some additional notation. Let
\begin{enumerate}
\item[(i)] $I_1=[0,(x^1+x^2)/2]$,
\item[(ii)] $I_i=[(x^{i-1}+x^{i})/2,(x^{i}+x^{i+1})/2]$ for $2\leq i \leq q-1$,
\item[(iii)]  $I_q=[(x^{q-1}+x^{q})/2,1]$,
\end{enumerate}
 be the ``full-electorates" corresponding each occupied position. A full-electorate $I_i$ is the set of voters for whom a given occupied position $x^i$  is the nearest, so that any candidates located there are ranked first equal for these voters. For each $i\in [q]$ let $I_i^L=\{y\in I_i: y\leq x^i\}$ and $I_i^R=\{y\in I_i: y\geq x^i\}$ be the left and right ``half-electorates" whose union is the full-electorate $I_i$. That is, we simply partition a full-electorate into those voters whose ideal positions lie to the left of the given occupied position and those who lie to the right. We note that  $\ell(I_i^R)=\ell(I_{i+1}^L)$ for $i\in [q-1]$.

We now present our characterisation of NCNE for best-worst rules, Theorem \ref{abb0general}, which provides five necessary and sufficient conditions for a profile to be an NCNE for a given best-worst rule. Condition (i) states that the outermost occupied positions must be occupied by two candidates apiece. It is clear that they cannot be single candidates, but this condition also excludes the possibility of more than two candidates, as in the well-known case of plurality (see \citealp{eatonlipsey}). The second condition says that all paired candidates' half-electorates are the same length, excluding end electorates, while (iii) relates these interior half-electorates to the outermost half-electorates. Conditions (iv) and (v) put restrictions on the lengths of various electorates: first, an unpaired candidate's full-electorate cannot be smaller than any half-electorate (excluding end half-electorates); and, second, a paired candidate's half-electorate cannot be smaller than an unpaired candidate's half-electorate (excluding end half-electorates).  

An important observation to make is that, with the exception of (iii), all the remaining conditions  are identical for any rule---they do not depend directly on $c$, as long as  $c<1$. Condition (iii) demonstrates that the equilibrium spacing of the candidates will be affected by $c$, but not the configuration of candidates (i.e., the number of occupied positions and how many candidates occupy them). Thus, if they exist (we will see shortly that they do for any rule with $c<1$), NCNE for best-worse rules will have the same general form as NCNE for plurality, the only difference being the exact location of $x^1,\ldots,x^q$.
 
\begin{thm}\label{abb0general}
Given a   rule $s=(c,m)$, with $c<1$, the following conditions are necessary and sufficient for a profile $x$ to be an NCNE:
\begin{itemize}
\item[(i)]$n_i\leq 2$ for all $i\in [q]$ and $n_1=n_q=2$. That is, candidates at the most extreme occupied positions are paired.
\item[(ii)] If $n_i=2$ for $1<i<q$, then $\ell(I_i^L)=\ell(I_i^R)=\ell(I_1^R)=\ell(I_q^L)$. Let $I^p$ denote this common measure. That is, all paired candidates' half-electorates are the same length (except end half-electorates).
\item[(iii)] $\ell(I_1^L)=\ell(I_q^R)=I^p+\frac{c}{2}$.
\item[(iv)] If $n_i=1$, then  both $ \ell(I_i)\geq \ell(I_k^L)$ for all $k\ne 1$ and $ \ell(I_i)\geq\ell(I_k^R)$ for all $k\ne q$.  That is, any (unpaired) candidate's full-electorate is no smaller than any other half-electorate (excluding end half-electorates).
\item[(v)] $I^p\geq \ell(I_k^L)$ for all $k\ne 1$ and $I^p\geq \ell(I_k^R)$ for all $k\ne q$. That is, a paired candidate's half-electorate (excluding end half-electorates) is no smaller than any other (unpaired) candidate's half-electorate (excluding the end half-electorates).
\end{itemize}

\end{thm}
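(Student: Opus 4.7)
The plan is to prove necessity and sufficiency separately, using marginal deviation analysis for the structural conditions and jump deviations for the remaining inequalities. The key simplifying observation, enabled by $c<1$ (so Proposition \ref{BRonly} rules out CNE-like profiles), is that no interior occupied position $x^i$ with $1<i<q$ is ever the farthest from any voter; consequently, only candidates at $x^1$ and $x^q$ receive negative votes. This decouples the negative-vote bookkeeping from the interior and makes the interior analysis closely parallel to that for plurality (Eaton--Lipsey), with the parameter $c$ entering only through the extreme-position marginal conditions.

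For necessity, I would first invoke the auxiliary lemma ruling out candidates at the endpoints of $[0,1]$. To establish (i), I would show $n_i\le 2$ by the standard pigeonhole argument: a candidate in a cluster of size $n_i\ge 3$ at $x^i$ can move infinitesimally aside, trading a $1/n_i$-share of one half-electorate for sole first place on the other, which is profitable on at least one side. To show $n_1=n_q=2$, I would observe that a lone extreme candidate at $x^1$ profits unambiguously by a small inward move, gaining both first-place voters and relief from the sole negative-vote burden, for a net marginal gain of $(1+c)/2$ per unit of displacement. For (ii), the indifference of each interior paired candidate to small left/right moves equates $\ell(I_i^L)$ and $\ell(I_i^R)$, and chaining across adjacent positions gives the common value $I^p$. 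For (iii), the analogous indifference at $x^1$ couples with a $c/2$ swing in negative votes from the voters in $[(x^1+x^q)/2,1]$; combining the left- and right-move conditions at $x^1$ and $x^q$ eliminates $(x^1+x^q)/2$, forces the symmetry $x^1+x^q=1$, and then $\ell(I_1^L)=\ell(I_q^R)=I^p+c/2$ falls out. Conditions (iv) and (v) follow from jump-deviation arguments: if a half-electorate (or an unpaired full-electorate) violated the stated bound, an appropriately chosen candidate could profitably relocate next to that position.

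For sufficiency, I assume (i)--(v) and verify that no deviation is profitable. Infinitesimal moves by interior paired candidates are neutral by (ii); infinitesimal moves by the extreme paired candidates are neutral by (iii), which exactly balances the first-place swing against the negative-vote swing (using $x^1+x^q=1$); infinitesimal moves by unpaired candidates are nonprofitable by (iv) and (v). Jump deviations---relocating next to an existing position to share it, leaving one's current position empty, or creating a new isolated position---are then ruled out by (iv) and (v), which ensure that no alternative target electorate is large enough to beat the mover's equilibrium score of $I^p$ (net of the appropriate negative-vote contribution at the extremes).

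The main obstacle will be the coupled bookkeeping in the necessity step for (iii): correctly tracking both the first-place and negative-vote changes as an extreme paired candidate moves, and extracting both the symmetry $x^1+x^q=1$ and the $c/2$ correction from the combined marginal conditions at $x^1$ and $x^q$. A secondary difficulty is ensuring the jump-deviation catalogue in sufficiency is exhaustive, particularly for deviations that cross between the two extremes, since such moves simultaneously alter the positive- and negative-vote bookkeeping at both ends and require care to bound.
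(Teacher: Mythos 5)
Your overall plan (marginal conditions for (ii)--(iii), jump deviations for (iv)--(v), and a case-by-case sufficiency check) mirrors the paper's proof, but two necessity steps have genuine gaps. First, the pigeonhole argument for $n_i\le 2$ ignores negative votes, and this matters precisely at the extreme positions. If $n_1\ge 3$ at $x^1$, a small outward move does not just trade a $\tfrac{1}{n_1}$-share of positive votes for sole first place on $[0,x^1]$: it also converts the candidate's $\tfrac{c}{n_1}\ell(I_3)$ share of the negative votes (with $I_3=[(x^1+x^q)/2,1]$) into the full burden $c\,\ell(I_3)$. Summing the gains from the two infinitesimal moves then gives $\left(1-\tfrac{2}{n_1}\right)\left[\ell(I_1)+\ell(I_2)-c\,\ell(I_3)\right]$, which need not be positive when $c\,\ell(I_3)$ is large, so ``profitable on at least one side'' does not follow. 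The paper handles this by combining the two no-deviation inequalities with unequal weights (multiplying the inward-move condition by $n_1-1$), which yields $(2-n_1)\ell(I_2)\ge 0$ and hence $n_1\le 2$ unconditionally; your sketch only rules out $n_1=1$ at the extremes and leaves $n_1\ge 3$ uncovered.

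Second, for (ii), ``chaining across adjacent positions'' only equates half-electorates of paired positions that are consecutive. If two interior paired positions are separated by an unpaired candidate, the unpaired candidate is indifferent to small moves, so no marginal condition links its left and right half-electorates, and the chain breaks; likewise you do not get $\ell(I_1^R)=\ell(I_q^L)=I^p$ when $x^2$ or $x^{q-1}$ is unpaired. The missing ingredient is the paper's equal-scores lemma (Lemma \ref{equalscores}): a jump deviation placing a paired candidate just beside another paired position earns exactly that position's half-electorate, which forces all paired candidates' scores, and hence all these half-electorates, to coincide. Since your toolkit already contains jump deviations, this is repairable, but as written the necessity of (ii) (and consequently your derivation of (iii), which uses $\ell(I_1^R)=\ell(I_q^L)=I^p$) is incomplete. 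Your treatment of (iii) itself---eliminating $(x^1+x^q)/2$ from the two extreme marginal conditions to force $x^1+x^q=1$ and extract the $\tfrac{c}{2}$ correction---and your sufficiency catalogue are essentially the paper's argument.
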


\begin{proof} That (i) is necessary follows from  Lemma \ref{nomorethantwo}, so we start by showing the necessity of (ii).
Suppose candidate  $j$ is at $x^i$, where $n_i=2$ and suppose without loss of generality that $\ell(I_i^L)>\ell(I_i^R)$. Then 
$ v_j(x^{i-},x_{-j})= \ell(I_i^L) > \ell(I_i^R) =v_j(x^{i+},x_{-j}), $ contradicting Lemma \ref{45cand2}. So $\ell(I_i^L)=\ell(I_i^R)$. Let $I^p$ denote this common measure.
Moreover, note that $v_1(x^{1+},x_{-1})=\ell(I_1^R)$. Using Lemmas \ref{45cand2} and \ref{equalscores}, we know $v_1(x^{1+},x_{-i})=v_1(x)=v_j(x)=I^p$, so that  $\ell(I_1^R)=I^p$. Similarly, $\ell(I_q^L)=I^p$. Hence, condition (ii) is necessary.

Now condition (iii). Note that we must have $\ell(I_1^L)=\ell(I_q^R)$. Otherwise,  if $\ell(I_1^L)>\ell(I_q^R)$, then, using Lemmas \ref{45cand2} and  \ref{equalscores},$$v_q(x)=v_1(x)=v_1(x^{1-},x_{-1})>v_q(x^{q+},x_{-q})=v_q(x),$$ a contradiction. Thus, $(x^1+x^q)/2=1/2$. Hence, 
$ v_1(x)=\frac{1}{2}( \ell(I_1)+I^p)-\frac{c}{4}$,
which, by Lemma \ref{45cand2}, is equal to $v_1(x^{1+},x_{-1})= I^p  $,
so $\ell(I_1^L)=I^p+\frac{c}{2}$.

Now conditions (iv) and (v). Let candidate $l$ be at $x^i$. Then,  if $n_i=1$, we have $v_l(x)= \ell(I_i) $. Suppose there is some $k>1$ such that $\ell(I_i)<\ell(I_k^L)$. Clearly the half electorate $I_k^L$ could not be $i$'s half electorate, i.e. $k=i$ or $k=i+1$. So we have
$$v_l(x^{k-},x_{-l})= \ell(I_k^L) > \ell(I_i) =v_l(x),$$
so this is not an NCNE. So we must have $\ell(I_k^L)\leq \ell(I_i)$. For (v), if $n_i=2$, then (noting that $i$ can be 1 or $q$ since all paired candidates receive the same score by Lemma \ref{equalscores}) $v_l(x)= I^p $, which, to avoid contradiction, implies $I^p\geq \ell(I_k^L)$ for all $k\ne 1$. Similarly for right electorates.

Now sufficiency. We need to check that no candidate can deviate profitably.
Consider candidate $i$ at $x^j$, where $n_j=2$ ($i$ could be an end candidate). We know  that all paired candidates get the same score, $v_i(x)=I^p$, and that $v_i(x^{1-},x_{-1})=v_i(x)$, so $i$ would not want to deviate to $x^{1-}$ or $x^{q+}$. Also, if $t\in (x^k,x^{k+1})$ for some $k<q$, then $v_i(t,x_{-1})=\ell(I_k^R)\leq I^p =v_i(x)$ by condition (v). 
Candidate $i$ would also not deviate to an occupied position $x^k$, $k\ne j$. Doing so would yield a score of $v_i(x^k,x_{-i})=\frac{2}{3}I^p<v_i(x)$ if $n_k=2$ or a score of $v_i(x^k,x_{-i})=\frac{1}{2}\ell(I_k)=\frac{1}{2}(\ell(I_k^L)+\ell(I_k^R))\leq I^p=v_i(x)$ if $n_k=1$, by (v).  So no paired candidates would deviate.

Consider an unpaired candidate $i$ at position $x^j$. Then $v_i(x)= \ell(I_j) $. Clearly any moves within the interval $(x^{j-1},x^{j+1})$ do not change $i$'s score. Suppose $t\in (x^k,x^{k+1})$ for some $k\notin \{j-1,j,q\}$. Then $v_i(t,x_{-i})= \ell(I_k^R)\leq\ell(I_j)=v_i(x)$, so $i$ will not move to any unoccupied position. Suppose $n_k=2$ and $k\notin \{j-1,j+1\}$. Then $v_i(x^k, x_{-i})=\frac{2}{3} I^p < I^p \leq  \ell(I_j) =v_i(x)$, by (iv). Suppose $n_k=1$ and $k\notin \{j-1,j+1\}$. Then $v_i(x^k,x_{-i})=\frac{1}{2} (\ell(I_k^L)+\ell(I_k^R)) \leq  \ell(I_j) =v_i(x)$. So no unpaired candidate wants to deviate to any occupied position that is not adjacent to the candidate's current position.

Finally, we check that no unpaired candidate would move to an adjacent occupied position. If $n_{j-1}=2$, $j-1\ne 1$, then $v_i(x^{j-1},x_{-i})=\frac{1}{3} (I^p+\ell(I_j)) \leq \frac{2}{3} \ell(I_j) <\ell(I_j)=v_i(x)$. If $j-1=1$ then $v_i(x^{j-1},x_{-i})=\frac{1}{3}(\ell(I_1^L)+\ell(I_j))-\frac{c}{6}=\frac{1}{3} \left( I^p+\frac{c}{2}+\ell(I_j)\right)-\frac{c}{6}\leq \frac{2}{3}\ell(I_j)<\ell(I_j)=v_i(x)$. If $n_{j-1}=1$, then $v_i(x^{j-1},x_{-i})=\frac{1}{2}(\ell(I_{j-1}^L)+\ell(I_j))\leq \ell(I_j)=v_i(x)$. So no unpaired candidate wants to move to the next left occupied position or, by similar arguments, to the next right occupied position.
We have checked all possible deviations, so $x$ is a NCNE.
 \end{proof}

While Theorem \ref{abb0general} gives necessary and sufficient conditions for an NCNE, it is not yet clear that these conditions can be satisfied for an arbitrary number of candidates and any $c<1$. The next result addresses this question and shows that they do indeed  exist for any $m\ne 3$. 

\begin{cor}\label{NCNEexist} For all $m\geq 2$, except for $m=3$, NCNE exist for rules $s=(c,m)$ with $c<1$, and for $m\geq 6$ there are infinitely many NCNE for a given rule. Moreover, all NCNE  take the same general form as plurality (they have NCNE with the same number of occupied positions, $q$, with the same number of candidates, $n_i$, at each one, but perhaps  different locations).\end{cor}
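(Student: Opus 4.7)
The plan is to apply Theorem~\ref{abb0general} directly: for each $m\neq 3$ with $m\geq 4$ I will exhibit a configuration satisfying the five conditions, and I will handle $m=3$ via a short combinatorial obstruction (the trivial case $m=2$ collapses to plurality). The nonexistence at $m=3$ is immediate: condition~(i) forces $n_1=n_q=2$, so every NCNE profile requires $\sum_i n_i\geq 4>3$, a contradiction.

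For existence I propose to build explicit NCNE case by case. For even $m=2k$ with $k\geq 2$, I will place $q=k$ pairs in arithmetic progression, setting $x^1=I^p+c/2$, $x^{i+1}-x^i=2I^p$, and using the total-length equation $2kI^p+c=1$ to solve $I^p=(1-c)/(2k)>0$. Conditions (i)--(iii) are built into the construction and (iv), (v) are vacuous since no position is unpaired. For odd $m=2k+1\geq 5$ I will use the same arithmetic progression with $q=m-2$ positions, pairs at the two endpoints and singletons at the $m-4$ interior positions; the length equation $2qI^p+c=1$ again pins down $I^p$, every non-end half-electorate has length $I^p$ (so (v) holds with equality), and every unpaired full-electorate has length $2I^p\geq I^p$ (so (iv) is satisfied).

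For the infinite-family claim at $m\geq 6$ I will keep the endpoint-paired/interior-unpaired skeleton but allow the spacing between interior singletons to vary. A length accounting in the $m=6$ case with two interior singletons shows that $I^p$ can range over an interval such as $[(1-c)/8,\,(1-c)/6)$, with the gap between the two singletons absorbing the slack; each value of $I^p$ in this range yields a distinct NCNE, and the same idea extends to larger $m$ by letting more than one interior gap vary. The ``same general form as plurality'' clause then follows at once from the characterisation, because conditions (i), (ii), (iv), (v) do not mention $c$, so the admissible combinatorial data $(q,n_1,\ldots,n_q)$ under $s=(c,m)$ is identical to the $c=0$ (plurality) case; only (iii) depends on $c$, and it merely shifts the outermost half-electorate lengths without altering the combinatorial structure.

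The step I expect to be the most delicate is the verification of (iv) and (v) in the perturbed constructions used to produce the infinite family. Uniform spacing makes these checks trivial by symmetry, but once interior gaps vary one must track each perturbed half-electorate to confirm that none exceeds $I^p$ and that every unpaired full-electorate still dominates every other non-end half-electorate; the boundary values of the admissible range for $I^p$ will be exactly where one of these inequalities first becomes tight.
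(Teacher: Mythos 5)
Your proposal is correct and follows essentially the same route as the paper: nonexistence for $m=3$ via the forced pairing at the extreme occupied positions, existence via uniformly spaced configurations with the end electorates fixed by condition (iii) (giving $I^p=(1-c)/(2q)$), and infinitude for $m\geq 6$ by shrinking the gap between interior singletons while enlarging the remaining half-electorates, which is exactly the paper's $\epsilon$-perturbation argument. The only difference is that you verify the perturbed inequalities (iv)--(v) in full only for $m=6$ and assert the extension to general $m$ (the paper carries out the perturbation explicitly for even and odd $m$, using for odd $m$ a slightly different skeleton with a singleton at the median); that remaining check is routine and your plan correctly identifies where it lies.
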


\begin{proof} For $m=3$, nonconvergent equilibria cannot exist by the familiar argument \citep{eatonlipsey} that one of them would have to be alone at an outermost occupied position, and would have an incentive to move inwards.  

Consider $m\geq 4$. Suppose candidates are positioned so that all half-electorates have the same length, except for end electorates, and $n_1=n_q=2$. That is, $\ell(I_k^L)=\ell(I_j^R)=I^p$ for all $k\ne 1$, $j\ne q$. 
Then, we place $x^1$ and $x^q$ so that (iii) is satisfied, from which it follows that $I^p=\frac{1}{2q}(1-c)$. By construction, then,  (iv) and (v) are satisfied, and we have an NCNE.

Next, we show that there are infinitely many NCNE for $m\geq 6$. If $m$ is even, construct a profile as above, but with $q=(m+2)/2$ occupied positions,  all of them occupied by two candidates except for the two innermost positions, $x^k$ and $x^{k+1}$, which are occupied by only one candidate each, and all half-electorates except for the outermost of the same length. This will be an equilibrium by the argument of the previous paragraph, with $x^1$ and $x^q$ chosen to satisfy (iii). Let us increase the length of each half-electorate except for $I_k^R$ and $I_{k+1}^L$ by $\epsilon>0$, so that $I^{p'}=I^p+\epsilon$ (that is, we are moving all positions inwards at the expense of the interior two candidates). This maintains (i)-(iii). Condition (iv) will still be satisfied since the only unpaired candidates are those at $x^k$ and $x^{k+1}$, who have full-electorates of length $\ell(I_k)=I^{p'}+\ell(I_k^R)>\max\{I^{p'}, \ell(I_k^R)\}$. Clearly, (v) will still be satisfied, since we are increasing the length of $I^p$ and decreasing the length $I_k^R$ and $I_{k+1}^L$.

If there is an odd number of candidates $m\geq 7$ we can do a similar thing. We start with $q=(m+3)/2$ occupied positions, symmetric about the median, which is occupied by a single candidate. The two innermost occupied positions to the left and right of the median are also occupied by single candidates. Label the   occupied position  at the median  as  $x^k$. All  occupied positions other than these three have two candidates apiece, and we place them so that all half-electorates except the outermost are of the same length. Again, choose $x^1$ and $x^q$ so that (iii) is satisfied. Now, increase the length of all half-electorates except for $I_k^L$ and $I_k^R$ by $\epsilon>0$. As above, this maintains (i)-(iii). Condition (iv) will clearly still be satisfied for all $i\ne k$. For $I_k$, the full electorate is getting smaller, but $\ell(I_k)=2I^p-J\epsilon>I^p+\epsilon$  for small $\epsilon$, where $J$ is the number of half intervals to one side of the median that increase in length. Condition (v) will still be satisfied, since the paired candidates' half-electorates $I^p$ are increasing in length, while the unpaired candidates' half-electorates are either increasing at the same rate, or getting smaller in the case of $I_k^L$ and $I_k^R$.
\end{proof}

 An important consequence of Theorem \ref{abb0general} is that plurality rule produces the most dispersed equilibria, while incorporating a negative vote pulls the platforms inward.  Essentially, the correct choice of $c$ allows an election designer to pick any level of dispersion between that of plurality and full convergence, an important result given the tradeoff between moderation and representation discussed in the Introduction. This is stated in Corollary \ref{extremeplurality}.

\begin{cor}\label{extremeplurality} For a given configuration of candidates, i.e., fixing $q$ and $n_i$ for $i\in [q]$, but allowing $x^i$ to vary, the most extreme equilibria occur under plurality.\end{cor}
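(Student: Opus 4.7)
The plan is to use Theorem \ref{abb0general} to write the extremal positions $x^1$ and $x^q$ as affine functions of $c$ and of the common paired half-electorate length $I^p$, and then to bound $I^p$ from below in a way that tracks $c$. First, condition (iii) gives $x^1 = \ell(I_1^L) = I^p + c/2$ and $x^q = 1 - \ell(I_q^R) = 1 - I^p - c/2$, so that the spread is $x^q - x^1 = 1 - 2I^p - c$. This reduces the problem to determining how small $I^p$ can be.

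Second, I would write the spread as a telescoping sum of adjacent gaps, each twice the corresponding right half-electorate:
\[x^q - x^1 \;=\; \sum_{i=1}^{q-1}(x^{i+1}-x^i) \;=\; 2\sum_{i=1}^{q-1}\ell(I_i^R).\]
By conditions (ii) and (v), $\ell(I_i^R) \leq I^p$ for every $i \in [q-1]$ (with equality when $n_i = 2$, including $i = 1$ since $n_1 = 2$), so $x^q - x^1 \leq 2(q-1) I^p$. Combining with the identity of the previous display yields $I^p \geq (1-c)/(2q)$, and substituting back gives
\[x^1 \;\geq\; \frac{1 + (q-1)c}{2q}, \qquad x^q \;\leq\; \frac{2q - 1 - (q-1)c}{2q}, \qquad x^q - x^1 \;\leq\; \frac{(q-1)(1-c)}{q}.\]
Each of these bounds describes a more extreme equilibrium (smaller $x^1$, larger $x^q$, larger spread) as $c$ decreases, making $c = 0$---plurality---the unique candidate for the most extreme NCNE of a given configuration.

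Third, I would verify attainability by exhibiting the NCNE that saturates the bound: place the candidates at the positions $x^i = (I^p + c/2) + 2(i-1)I^p$ for $i = 1, \ldots, q$, with $I^p = (1-c)/(2q)$. Conditions (i)--(iii) hold by construction, and (iv)--(v) hold because every non-end half-electorate equals $I^p$ while every unpaired candidate's full electorate equals $2I^p \geq I^p$. Hence for each admissible configuration the maximum spread $(q-1)(1-c)/q$ is realised, and since it is strictly decreasing in $c$ on $[0,1)$, the most extreme NCNE indeed occurs at $c = 0$.

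The main obstacle I anticipate is the setup step: recognising that for a fixed configuration the only $c$-dependent equilibrium constraint is (iii), so that the geometry collapses to a one-parameter family in $I^p$ once the configuration is fixed. Once this observation is in hand, the combination of (iii) with the spacing bound derived from (ii) and (v) is a short computation, and the attainability step is an immediate check that the symmetric evenly-spaced profile satisfies all five equilibrium conditions.
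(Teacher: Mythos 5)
Your proposal is correct and follows essentially the same route as the paper's proof: both use condition (iii) to write $x^1=I^p+c/2$, use (ii) and (v) to bound every non-end half-electorate by $I^p$ so that $I^p\geq \frac{1-c}{2q}$ with equality attainable by the evenly spaced profile, and conclude that the extremal $x^1=\frac{1+(q-1)c}{2q}$ increases in $c$. Your telescoping-sum derivation of the lower bound and the explicit attainability check simply make the paper's brief partition argument ($2qI^p+c=1$) more explicit.
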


\begin{proof} 
Given a number of occupied positions $q$ and the number of candidates $n_i$ at each of them, maximising dispersion consists, essentially, in minimising the location of $x^1$. By (iii), then, we want to minimise $I^p$. Looking at condition (v), we can see that we will want to to have $\ell(I_k^L)=\ell(I_k^R)=I^p$, which will then imply that (iv) is satisfied. This will partition the issue space into $2q$ intervals $I^p$ and two intervals of length $c/2$. Thus, $2qI^p+c=1$, so that $I^p=\frac{1}{2q}(1-c)$ and, hence, $x^1=\frac{1}{2q}(1+c(q-1))$. Increasing $c$, then, increases $x^1$, leading to less dispersed equilibria.
\end{proof} 

In the case of four or five candidates, there is a unique NCNE. 

\begin{cor}\label{45candidates} If $m=4$, then there is a unique NCNE, given by profile $x=((x^1,2),(1-x^1,2))$, where 
$x^1=\frac{1}{4}(1+c).$
If $m=5$, then there is a unique NCNE, given by profile $x=((x^1,2), (1/2,1),(1-x^1,2))$, where
$x^1=\frac{1}{6}(1+2c).$
\end{cor}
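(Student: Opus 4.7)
The plan is to apply Theorem \ref{abb0general} directly, treating it as a system of equations and inequalities on $(q,n_1,\ldots,n_q,x^1,\ldots,x^q)$. The strategy breaks into three steps for each value of $m$: (1) use condition (i) to enumerate the admissible configurations $(q,n_1,\ldots,n_q)$; (2) solve the equalities in (ii) and (iii) together with the total-length constraint $\sum_i \ell(I_i)=1$ to pin down the positions; (3) check that the inequalities in (iv) and (v) are satisfied.

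For $m=4$, condition (i) forces $n_1=n_q=2$, which by itself accounts for $4$ candidates, so $q=2$ (any larger $q$ would require an interior occupied position with $n_i\ge 1$, exceeding the candidate count). With $q=2$ there are no interior paired positions, so (ii) is vacuous beyond identifying $I^p:=\ell(I_1^R)=\ell(I_2^L)=(x^2-x^1)/2$. Condition (iii) gives $x^1=1-x^2=I^p+c/2$, and summing the four half-electorates $2(I^p+c/2)+2I^p=1$ yields $I^p=(1-c)/4$ and $x^1=(1+c)/4$. Since there are no unpaired candidates, (iv) is vacuous, and (v) holds with equality. Uniqueness is automatic because the configuration was forced.

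For $m=5$, condition (i) again gives $n_1=n_q=2$, leaving one candidate to distribute. This rules out $q=2$ (too few candidates) and $q\ge 4$ (interior positions would need $n_i\ge 1$ summing to $1$ across at least two of them); hence $q=3$ with $n_2=1$. The profile is then $((x^1,2),(x^2,1),(x^3,2))$. Condition (ii), in the form derived in the proof of Theorem \ref{abb0general}, gives $\ell(I_1^R)=\ell(I_3^L)=I^p$, which forces $x^2=(x^1+x^3)/2$. Condition (iii) gives $x^1=1-x^3=I^p+c/2$, so by symmetry $x^2=1/2$. Summing the six half-electorates, $2(I^p+c/2)+4I^p=1$, yields $I^p=(1-c)/6$ and $x^1=(1+2c)/6$. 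For (iv), the unpaired candidate at $x^2$ has full-electorate length $2I^p\ge I^p$, and every half-electorate (other than the end ones) has length exactly $I^p$; for (v), all the relevant half-electorates equal $I^p$, so both inequalities hold with equality.

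There is no real obstacle here: once Theorem \ref{abb0general} is available, the proof is a small finite case analysis plus a linear calculation. The only mild subtlety is the enumeration of configurations in step (1)—one must be careful that condition (i) combined with the total candidate count $m$ uniquely determines $(q,n_1,\ldots,n_q)$ in each case, so that the uniqueness claim of the corollary follows immediately rather than requiring a separate argument.
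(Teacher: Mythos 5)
Your proposal is correct and is essentially the intended argument: the paper states this corollary without proof precisely because it follows by direct application of Theorem \ref{abb0general}, enumerating the forced configurations via condition (i) and solving the linear constraints from (ii)--(iii) plus total length, exactly as you do. Your handling of the one subtle point---that $\ell(I_1^R)=\ell(I_q^L)=I^p$ must be read as holding even when no interior position is paired (as established via Lemmas \ref{45cand2} and \ref{equalscores} in the theorem's necessity proof)---is the right way to justify $x^2=1/2$ in the five-candidate case.
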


In the four- and five-candidate cases,  as expected by Corollary \ref{extremeplurality}, the amount of dispersion observed in the candidates' positions depends on the value of $c$ and is maximal when the rule is plurality. 
As $c$ grows towards $1$, the positions converge at the median voter position. As $c$ increases beyond this point, by Cox's \citeyearpar{cox1} characterisation of CNE,  we know that infinitely  many CNE are possible in an interval that becomes increasingly wide. Hence, there is a
bifurcation point that divides CNE from NCNE when $c=1$. As we move away from this point, more extreme positions are possible---on one side they take the form of CNE, and on the other side they are NCNE.

The six candidate case admits infinitely many equilibria  for a given rule, but the pattern is similar.

\begin{figure}[H]
\bpic{0}{300}{-5}{200} 
\put{$c=0$} at -40 0
\hl{320}{0}{0}
\vl{10}{0}{-5}
\vl{10}{320}{-5}
\bull{40}{-4}
\bull{40}{4}
\bull{120}{0}
\bull{200}{0}
\bull{280}{-4}
\bull{280}{4}
\put{$c=0.25$} at -40 40
\hl{320}{0}{40}
\vl{10}{0}{35}
\vl{10}{320}{35}
\bull{70}{36}
\bull{70}{44}
\bull{130}{40}
\bull{190}{40}
\bull{250}{36}
\bull{250}{44}
\put{$c=0.5$} at -40 80
\hl{320}{0}{80}
\vl{10}{0}{75}
\vl{10}{320}{75}
\bull{100}{76}
\bull{100}{84}
\bull{140}{80}
\bull{180}{80}
\bull{220}{76}
\bull{220}{84}
\put{$c=0.75$} at -40 120
\hl{320}{0}{120}
\vl{10}{0}{115}
\vl{10}{320}{115}
\bull{130}{124}
\bull{130}{116}
\bull{150}{120}
\bull{170}{120}
\bull{190}{116}
\bull{190}{124}
\put{$c=1$} at -40 160
\hl{320}{0}{160}
\vl{10}{0}{155}
\vl{10}{320}{155}
\bull{160}{175}
\bull{160}{169}
\bull{160}{163}
\bull{160}{157}
\bull{160}{151}
\bull{160}{145}
\epic 
\caption{Maximally dispersed NCNE for different choices of $c$.}\label{mostdispersed}

 \vspace{.3cm}

\bpic{0}{300}{-5}{120}
\put{$c=0$} at -40 0
\hl{320}{0}{0}
\vl{10}{0}{-5}
\vl{10}{320}{-5}
\bull{53}{-4}
\bull{53}{4}
\bull{160}{-4}
\bull{160}{4}
\bull{266}{-4}
\bull{266}{4}
\put{$c=0.25$} at -40 40
\hl{320}{0}{40}
\vl{10}{0}{35}
\vl{10}{320}{35}
\bull{80}{36}
\bull{80}{44}
\bull{160}{36}
\bull{160}{44}
\bull{240}{36}
\bull{240}{44}
\put{$c=0.5$} at -40 80
\hl{320}{0}{80}
\vl{10}{0}{75}
\vl{10}{320}{75}
\bull{106}{76}
\bull{106}{84}
\bull{160}{76}
\bull{160}{84}
\bull{212}{76}
\bull{212}{84}
\put{$c=0.75$} at -40 120
\hl{320}{0}{120}
\vl{10}{0}{115}
\vl{10}{320}{115}
\bull{133}{124}
\bull{133}{116}
\bull{160}{116}
\bull{160}{124}
\bull{187}{116}
\bull{187}{124}
\put{$c=1$} at -40 160
\hl{320}{0}{160}
\vl{10}{0}{155}
\vl{10}{320}{155}
\bull{160}{175}
\bull{160}{169}
\bull{160}{163}
\bull{160}{157}
\bull{160}{151}
\bull{160}{145}
\epic
\caption{Minimally dispersed NCNE for different choices of $c$.}\label{leastdispersed}
\end{figure}
 
 \

\textbf{Example.} With six candidates, there are two possible configurations   in an NCNE: we can have three occupied positions with two candidates apiece; or, we can have four occupied positions where the inner two positions are occupied by single candidates. Consider the latter profile first---the former will turn out to be a limiting case of the latter. If (i)-(iii) of Theorem \ref{abb0general} are satisfied, condition (iv) will always be true, since the unpaired candidate at $x^2$ has full-electorate length $\ell(I_2)=I^p+\ell(I_2^R)$, which is clearly larger than all other half-electorates excluding end electorates, which are of length either $I^p$ or $\ell(I_2^R)=\ell(I_3^L)$.  Thus, the only restriction is condition (v).

 To get a maximally dispersed equilibrium, we want $I^p$ to be as small as possible, which means setting $I^p=\ell(I_2^R)=\ell(I_3^L)$. This gives equilibrium profile $x=((x^1,2),(x^2,1), (1-x^2,1),(1-x^1,2))$ where $x^1=\frac{1}{8}(1+3c)$ and $x^2=\frac{3}{8}(1+c)$.   A number of these maximally dispersed equilibria are pictured in Figure \ref{mostdispersed} for a few different values of $c$.

To obtain a minimally dispersed equilibrium, we want $I^p$ to be as large as possible. Condition (iv) will always be satisfied, while condition (v) will still be satisfied if the length of the half-electorates $I_2^R$ and $I_3^L$ go to zero. There, the interior two candidates converge at the median and we are left with minimally dispersed equilibrium profile $x=((x^1, 2), (1/2,2), (1-x^1,2))$ where $x^1=\frac{1}{6}(1+2c)$. Thus, the equilibrium with three occupied positions is the limiting case of the equilibria with four occupied positions. These equilibria are depicted in Figure \ref{leastdispersed}.

\section{Conclusion}
\label{conclusion}

Different voting systems provide political candidates with different incentives and, hence, lead to different outcomes, not all of which are socially desirable. One would usually want a voting system in which adopting extremist positions is not  encouraged while, at the same time, voters are presented with some choice over the policies advocated by the candidates. We have shown that the class of best-worst rules offers a solid middle ground when voters have one positive vote and one negative vote with relatively smaller weight, i.e., so that one negative vote does not cancel out one positive vote. In particular, nonconvergent equilibria exist, and candidates adopt diffferent platforms in a very similar way to under plurality. Importantly, however, the strong best-rewarding incentives of plurality are tempered by the need to avoid negative votes and, indeed, any degree of dispersion between the extreme cases of plurality and full convergence of antiplurality can be obtained for the correct weighting of the negative vote. The need to avoid negative votes leads candidates to moderate their platforms, but without sacrificing diversity entirely.

Though natural, best-worst rules  have not been used in practice, as is the case for many of the voting rules studied in the social choice literature. However, our results provide  evidence that this system is worthy of consideration and presents several desirable properties.  

Future research should investigate the properties of best-worst voting rules in more realistic spatial models with, for example,   probabilistic voting or endogenous candidacy.

\section{Appendix}
\label{appendix}

We include here a number of lemmata that are needed for our main results. Several of these minor results are adapted from results in \citet{cahanslinko}, though similar conditions have appeared in various form in the previous literature since at least \citet{eatonlipsey}. 

The first lemma tells us that the most extreme occupied positions cannot be occupied by single candidates, and they cannot be located at the most extreme points on the issue space.

\begin{lemma}[\citealp{cahanslinko}]
\label{noextremepositions} 
In an NCNE, we must have $n_1,n_q\geq 2$. Moreover, no candidate may adopt the most extreme positions on the issue space. That is, $0<x^1$ and $x^q<1$.
\end{lemma}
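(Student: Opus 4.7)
The plan is to prove each of the two claims by contradiction, in each case producing a small rightward deviation by a leftmost-positioned candidate that strictly improves her score. By symmetry, the mirror-image claims for $n_q\geq 2$ and $x^q<1$ require no separate argument. Throughout, the NCNE hypothesis $x^1<x^q$ ensures that the intervals below have positive measure.

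\textbf{Step 1 ($n_1\geq 2$).} Suppose for contradiction that a lone candidate, say candidate~1, occupies $x^1$. I would shift her to $x^1+\epsilon$ for small $\epsilon>0$. She remains strictly leftmost, so two midpoints shift right by $\epsilon/2$: the midpoint with $x^2$ (gaining her $\epsilon/2$ in first-place voters) and the midpoint with $x^q$ (losing her $\epsilon/2$ in last-place voters, i.e.\ a gain of $c\epsilon/2$ in score). Since all other candidates are unaffected in her ballot rankings (to leading order), the net score change is $(1+c)\epsilon/2>0$, contradicting equilibrium. The claim $n_q\geq 2$ is symmetric.

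\textbf{Step 2 ($x^1>0$).} Now assume $n_1\geq 2$ (from Step~1) and $x^1=0$, and move one of the tied candidates from $0$ to $\epsilon$. Two strict gains result. First, she escapes the first-place tie inside the full-electorate $I_1=[0,x^2/2]$: her positive-vote share jumps from $\frac{1}{n_1}\ell(I_1)$ to essentially $\ell(I_1)$ (an $O(1)$ gain as $\epsilon\to 0$). Second, and more importantly, her last-place voters disappear entirely. For voters $y>x^q/2$ the farthest candidate is now one of the $n_1-1$ others remaining at $0$ (strictly farther than she is at $\epsilon$); for $y<x^q/2$ the farthest is one of the candidates at $x^q$. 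Hence her negative-vote contribution drops from $\frac{c}{n_1}(1-x^q/2)$ to $0$. Both effects strictly increase her score, contradicting equilibrium. The claim $x^q<1$ is symmetric.

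\textbf{Main obstacle.} The only real subtlety is verifying in Step~2 that shifting off $0$ eliminates \emph{all} of her last-place votes rather than merely reducing them. This uses the elementary distance comparison that every candidate still at $0$ is strictly farther from every voter at $y>\epsilon/2$ than the deviator at $\epsilon$, together with the fact that for $y<\epsilon/2$ the candidates at $x^q$ are farther still. Once this is observed, both contributions to the score change have unambiguous sign, and the remaining computations are routine.
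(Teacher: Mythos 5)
Your proof is correct and takes essentially the same route as the paper's: a lone candidate at $x^1$ profits from a small rightward move (gaining positive votes and shedding negative ones), and if $x^1=0$ a paired candidate profits by stepping just off the boundary, losing all last-place votes. Your additional observation that the deviation in Step 2 also breaks the first-place tie (an $O(1)$ gain) makes the argument strict even when $c=0$, a point the paper's brief sketch glosses over.
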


\begin{proof}  Evidently,   an unpaired candidate at $x^1$ could move to the right and capture a larger share of positive votes and, at the same time, reduce the number of negative votes. 

To see the second part,  suppose $x^1=0$. Then the at least two candidates at $x^1$ are ranked last equal by a positive measure of voters in the interval $(1-\frac{1}{2}x^q,1]$. By moving to $x^{1+}$, however, a candidate originally at $x^1$ is no longer ranked last by any voters, but still receives the same number of first-place rankings. \end{proof}

The next lemma puts a condition on the continuity  of the function $v_i(t,x_{-i})$ when, in equilibrium, $i$ is at a position occupied by one other candidate and makes a small deviation. 

\begin{lemma}
\label{45cand2} 
Suppose at profile $x$ candidate $i$ is at $x^l$ and $n_l=2$. Then $v_i(x^{l-},x_{-i})+v_i(x^{l+},x_{-i})=2v_i(x)$. In particular, when $x$ is in NCNE, we have $v_i(x^{l-},x_{-i})=v_i(x^{l+},x_{-i})=v_i(x)$.
\end{lemma}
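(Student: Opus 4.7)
The plan is to decompose $v_i$ into its positive-vote (first-place) and negative-vote (last-place) contributions and verify the identity component by component.

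For the positive component, at $x$ candidate $i$ ties with the single other candidate at $x^l$ for first-place on the full-electorate $I_l$, yielding an expected first-place contribution of $\tfrac{1}{2}\ell(I_l)=\tfrac{1}{2}(\ell(I_l^L)+\ell(I_l^R))$ by the fair-lottery rule, and $i$ receives no other first-place votes. Deviating to $x^{l-}$ leaves the partner alone at $x^l$ and makes $i$ strictly closer to every voter in $I_l^L$ (and strictly farther from every voter in $I_l^R$), so $i$'s first-place contribution becomes $\ell(I_l^L)$; by symmetry the deviation to $x^{l+}$ yields $\ell(I_l^R)$. These two quantities sum to $\ell(I_l^L)+\ell(I_l^R)$, which is exactly twice the original.

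For the negative component, split according to whether $1<l<q$ or $l\in\{1,q\}$. If $1<l<q$, then for every voter either $x^1$ or $x^q$ is strictly farther than $x^l$, so $i$ receives no negative votes at $x$; after either deviation $i$ still lies strictly between $x^1$ and $x^q$, so the same holds, and both sides of the identity contribute $0$. If $l=1$, set $J=((x^1+x^q)/2,1]$, the set of voters for whom $x^1$ is the strictly farthest occupied position. At $x$ the pair at $x^1$ ties for last on $J$, giving $i$ an expected contribution of $-\tfrac{c}{2}\ell(J)$; at $(x^{1-},x_{-i})$ the deviator $i$ alone is farthest, contributing $-c\,\ell(J)$; at $(x^{1+},x_{-i})$ the partner left behind at $x^1$ is uniquely farthest, so $i$ contributes $0$. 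The two deviation contributions again sum to twice the original. The case $l=q$ is symmetric.

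Combining the two components gives $v_i(x^{l-},x_{-i})+v_i(x^{l+},x_{-i})=2v_i(x)$. For the ``in particular'' statement, in an NCNE no individual deviation can strictly increase $i$'s score, so both deviation values are at most $v_i(x)$; since they average to $v_i(x)$, each must equal $v_i(x)$. The delicate step is the negative-vote bookkeeping at $l\in\{1,q\}$, where one has to track precisely which of the partner and the deviator becomes uniquely farthest after the perturbation; everything else is a symmetric length computation.
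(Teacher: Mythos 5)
Your proof is correct and follows essentially the same route as the paper: compute the score at $x$ and after the two infinitesimal deviations (tracking the negative-vote interval $J$ separately for $l\in\{1,q\}$ versus interior $l$), observe the two deviation scores sum to $2v_i(x)$, and conclude equality from the NCNE no-improvement inequalities. The explicit positive/negative decomposition is only a presentational difference from the paper's direct computation.
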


\begin{proof} The issue space can be divided into subintervals of voters who all rank $i$ in the same position.
The immediate interval around $x^l$, $I_l=I_l^L\cup I_l^R$, is the set of voters from which the candidate  receives positive votes.
Let $J$ be the interval of voters from which $i$ receives negative votes. In particular, $J$ is nonempty only if $l=1$ or $l=q$, and it is located at the opposite side of the issue space. Thus, if $l\notin\{1,q\}$, we have 
$v_i(x)=\frac{1}{2}(\ell(I_l^L)+\ell(I_l^R))$. Then $v_i(x^{l-},x_{-i})=\ell(I_l^L)$ and $v_i(x^{l+},x_{-i})=\ell(I_l^R)$. For NCNE, we need  $v_i(x^{l-},x_{-i})\leq v_i(x)$ and $v_i(x^{l+},x_{-i})\leq v_i(x)$. Summing these inequalities, we need $v_i(x^{l-},x_{-i})+v_i(x^{l+},x_{-i})\leq 2v_i(x)$. This, in fact, turns out to be an equality, so that we must have $v_i(x^{l-},x_{-i})=v_i(x^{l+},x_{-i})=v_i(x)$.

If $l=1$ (symmetrically for $l=q$), we have $v_i(x)=\frac{1}{2}(\ell(I_l^L)+\ell(I_l^R))-\frac{c}{2}\ell(J).$  Also, $v_i(x^{l-},x_{-i})=\ell(I_l^L)-c\ell(J)$ and $v_i(x^{l+},x_{-i})=\ell(I_l^R)$. As in the previous case, summing the requirements that these two moves not be beneficial, we find that $v_i(x^{l-},x_{-i})=v_i(x^{l+},x_{-i})=v_i(x)$.\end{proof}

\begin{lemma} 
\label{equalscores} If $n_i=n_j=2$, then $v_i(x)=v_j(x)$ in NCNE. \end{lemma}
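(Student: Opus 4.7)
The plan is to first use Lemma \ref{45cand2} to express each paired candidate's score as a simple half-electorate length, and then compare the two via best-response inequalities coming from well-chosen cross-deviations. Let $i$ be at $x^l$ and $j$ at $x^{l'}$, with $l \leq l'$. By Lemma \ref{45cand2} together with the computation in its proof, $v_i(x) = v_i(x^{l+}, x_{-i}) = \ell(I_l^R) = (x^{l+1} - x^l)/2$; symmetrically $v_j(x) = \ell(I_{l'}^L) = (x^{l'} - x^{l'-1})/2$.

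If $l = l'$ or $l' = l+1$, equality is immediate from the geometric identity $\ell(I_k^R) = \ell(I_{k+1}^L)$ recorded just before Theorem \ref{abb0general}. For $l' \geq l + 2$, I would consider the deviation in which $i$ moves from $x^l$ to $x^{(l'-1)+}$, the point infinitesimally to the right of the occupied position $x^{l'-1}$. This new position is strictly interior, so $i$ collects the first-place electorate $I_{l'-1}^R$ and receives no negative votes, giving deviation score $\ell(I_{l'-1}^R) = \ell(I_{l'}^L) = v_j(x)$. The NCNE best-response condition then forces $v_j(x) \leq v_i(x)$. A symmetric deviation by $j$ to $x^{l+}$ yields $v_i(x) \leq v_j(x)$, and the two inequalities combine to give $v_i(x) = v_j(x)$.

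The main subtlety lies in the negative-vote accounting when a paired candidate starts at an end position ($l = 1$ or $l' = q$): her original score includes negative votes that vanish after she moves to an interior point. Lemma \ref{45cand2} has already encapsulated this asymmetry into the clean identity $v_i(x) = \ell(I_l^R)$, so pre- and post-deviation scores can be compared uniformly without any further casework.
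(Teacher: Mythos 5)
Your proposal is correct and follows essentially the same strategy as the paper: two cross-deviation inequalities anchored by Lemma~\ref{45cand2}, applied once in each direction and then combined. The only cosmetic difference is that you send the deviator to the interior point just right of $x^{l'-1}$ (collecting $\ell(I_{l'-1}^R)=\ell(I_{l'}^L)$ with no negative votes), whereas the paper deviates to $x^{j+}$ itself and notes that either paired candidate would earn the same score there; both yield $v_j(x)\leq v_i(x)$ and its mirror image, hence equality.
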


\begin{proof} Let $k$ be a candidate at $x^i$ and $l$ be a candidate at $x^j$. 
Note that if $k$ moves to $x^{j+}$ or $x^{j-}$, due to the nature of the voting rule, $k$ receives exactly the same score as $l$ would recieve on moving to $x^{j+}$ or $x^{j-}$. So $v_k(x^{j+},x_{-k})=v_l(x^{j+},x_{-l})$. Then, if $x$ is in NCNE, using Lemma \ref{45cand2} gives that $v_l(x)=v_l(x^{j+},x_{-l})=v_k(x^{j+},x_{-k})\leq v_k(x)$. 
Similarly, $v_l(x^{i+},x_{-l})=v_k(x^{i+},x_{-k})$, from which it follows that $v_k(x)=v_k(x^{i+},x_{-k})=v_l(x^{i+},x_{-l})\leq v_l(x)$. 
So $v_k(x)=v_l(x)$.
\end{proof}

Next, we note that  there cannot be more than two candidates at any position. In particular, this implies that there cannot exist NCNE for $m=3$, a well known result.

\begin{lemma}
\label{nomorethantwo}
In any NCNE,  at any given position there are no more than two candidates. Moreover, $n_1=n_q=2$.
\end{lemma}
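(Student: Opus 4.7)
I would argue by contradiction for the first part: suppose some occupied position $x^i$ has $n_i \geq 3$, and show that at least one of the infinitesimal deviations $x^{i-}$ or $x^{i+}$ is strictly profitable for a candidate at $x^i$. The second part of the lemma then follows immediately by combining the bound $n_1, n_q \leq 2$ with Lemma \ref{noextremepositions}, which gives $n_1, n_q \geq 2$.

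The analysis splits on whether $x^i$ is interior or extreme. In the interior case $1 < i < q$, the key observation is that no voter ranks an interior candidate last: for any $y \in [0,1]$ the farthest occupied position from $y$ must lie at $x^1$ or $x^q$. Hence each candidate at $x^i$ has score $\frac{1}{n_i}(\ell(I_i^L) + \ell(I_i^R))$, with no negative-vote contribution. An infinitesimal deviation to $x^{i-}$ captures the half-electorate $I_i^L$ alone (the deviator becomes strictly closest for every voter in $I_i^L$, while the remaining $n_i - 1$ candidates at $x^i$ keep exclusive claim on $I_i^R$), and symmetrically $x^{i+}$ captures $I_i^R$ alone. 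Non-deviation then gives $(n_i - 1)\ell(I_i^L) \leq \ell(I_i^R)$ and $(n_i - 1)\ell(I_i^R) \leq \ell(I_i^L)$, which together force $(n_i - 1)^2 \leq 1$, contradicting $n_i \geq 3$. Both half-electorates are strictly positive because the occupied positions are distinct.

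For $i = 1$ (the case $i = q$ is symmetric), let $J \subseteq [0,1]$ denote the interval of voters who rank the candidates at $x^1$ last. The $n_1$ candidates share both $I_1$ and the negative votes from $J$, so their common score is $\frac{1}{n_1}(\ell(I_1^L) + \ell(I_1^R) - c\ell(J))$. A deviation to $x^{1-}$ makes the deviator the unique last-ranked candidate for every voter in $J$ while capturing $I_1^L$ alone, yielding $\ell(I_1^L) - c\ell(J)$. A deviation to $x^{1+}$ puts the deviator strictly inside the range of occupied positions, so all the negative votes pass to the $n_1 - 1$ candidates still at $x^1$, leaving the deviator with score $\ell(I_1^R)$. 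The two non-deviation inequalities rearrange to
\[
(n_1 - 1)\ell(I_1^R) \;\leq\; \ell(I_1^L) - c\ell(J) \;\leq\; \frac{\ell(I_1^R)}{n_1 - 1},
\]
with the $c\ell(J)$ term cancelling in the middle; this again forces $(n_1 - 1)^2 \leq 1$, so $n_1 \leq 2$.

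The only delicate point is the negative-vote bookkeeping at the extremes: one must verify that the $x^{1-}$ deviator absorbs \emph{all} the mass $c\ell(J)$ (in the limit the deviator is strictly the farthest occupied position for every voter in $J$) and that the $x^{1+}$ deviator escapes negative votes entirely (the remaining paired candidates at $x^1$ are still the unique leftmost occupied position). Once those score expressions are pinned down, the combinatorial squeeze on $\ell(I_1^L) - c\ell(J)$ is routine and gives the same contradiction as in the interior case.
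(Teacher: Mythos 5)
Your proposal is correct and follows essentially the same route as the paper's proof: argue by contradiction via infinitesimal left/right deviations from a position with three or more candidates, treat interior and extreme positions separately (with the same score computations, including the negative-vote bookkeeping at $x^{1-}$ and $x^{1+}$), combine the two no-deviation inequalities to force $n_i\le 2$, and obtain $n_1=n_q=2$ from Lemma \ref{noextremepositions}. The only cosmetic difference is that you chain/multiply the inequalities to get $(n_i-1)^2\le 1$ where the paper adds them.
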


\begin{proof}
By Corollary~\ref{BRonly} we have to consider only the case when $c<1$. 

First, we show that,   in NCNE, $n_i\leq 2$ for all $2\leq i \leq q-1$.   If $n_i>2$, where $2\leq i\leq q-1$, then candidate $k$, located at $x^i$ is not ranked last by any voter. Moreover, she is not ranked last by any voter even on deviating to $x^{i+}$ or $x^{i-}$. So the only change in her score on making these moves is from voters in the immediate subintervals $I_1=[(x^{i-1}+x^i)/2,x^{i}]$ and $I_2=[x^{i},(x^{i}+x^{i+1})/2]$, where  voters change candidate $k$ from first equal to first, and from first equal to $n_i$th, respectively. 

In NCNE we must have $$v_k(x^{i-},x_{-k})-v_k(x)= \ell(I_1)- \frac{1}{n_i}(\ell(I_1)+\ell(I_2))\leq 0 $$
and $$v_k(x^{i+},x_{-k})-v_k(x)= \ell(I_2) - \frac{1}{n_i}(\ell(I_1)+\ell(I_2))\leq 0.$$
Adding together these two inequalities we get the requirement that $n_i\leq 2$.

To show that  $n_1=n_q=2$, let us introduce the following notation:
 $I_1=[0,x^1]$, the  voters to the left of candidate 1 (note that by Lemma \ref{noextremepositions}, this set has positive measure); $I_2=[x^1,(x^1+x^2)/2]$, the voters in half the interval between candidates 1 and 2; $I_3= [(x^1+x^q)/2,1]$, the  voters for whom 1 is ranked last equal.

Note that
$ 
v_1(x) =\frac{1}{n_1}(\ell(I_1)+\ell(I_2))-\frac{c}{n_1}\ell(I_3).$
Consider if 1 moves to $x^{1-}$. Then  
$
v_1(x^{1-},x_{-1})= \ell(I_1)-c\ell(I_3)) .
$
If 1 moves to $x^{1+}$ then 
$
v_1(x^{1+},x_{-1})= \ell(I_2).
$
For NCNE we require that these moves not be beneficial to candidate 1.
That is, $v_1(x^{1-},x_{-1})\leq v_1(x)$ which implies we need 
\[
\ell(I_1)-c\ell(I_3))\leq \frac{1}{n_1}(\ell(I_1)+\ell(I_2))-\frac{c}{n_1}\ell(I_3),
 \]
or 
\begin{equation}\label{n1=2.1} \left(1-\frac{1}{n_1}\right)c \ell(I_3)\geq \ell(I_1)-\frac{1}{n_1}(\ell(I_1)+\ell(I_2)).\end{equation}
Similarly, for the other move we have $v_1(x^{1+},x_{-1})\leq v_1(x)$ which gives us
\[
\ell(I_2)\leq \frac{1}{n_1}(\ell(I_1)+\ell(I_2))-\frac{c}{n_1}\ell(I_3),
 \]
implying
$$  \left(1-\frac{1}{n_1}\right)c\ell(I_3)\leq \left(1-\frac{1}{n_1}\right)(\ell(I_1)+\ell(I_2))-(n_1-1)\ell(I_2).$$
 Combining this last equation with \eqref{n1=2.1} yields $(2-n_1)\ell(I_2)\geq 0$, which means $n_1\leq 2$. Hence, $n_1=2$, since we cannot have a lone candidate at $x^1$.
A similar argument gives $n_q=2$.
\end{proof}

\bibliographystyle{chicago}
\bibliography{bestworstbibliography}

\begin{thebibliography}{}

\bibitem[\protect\citeauthoryear{Alcantud and Laruelle}{Alcantud and
  Laruelle}{2014}]{alcantudlaruelle}
Alcantud, J. C.~R. and A.~Laruelle (2014).
\newblock Dis\&approval voting: a characterization.
\newblock {\em Social Choice and Welfare\/}~{\em 43}, 1--10.

\bibitem[\protect\citeauthoryear{Aragon\`{e}s and Xefteris}{Aragon\`{e}s and
  Xefteris}{2012}]{aragonesxefteris}
Aragon\`{e}s, E. and D.~Xefteris (2012).
\newblock Candidate quality in a {Downsian} model with a continuous policy
  space.
\newblock {\em Games and Economic Behavior\/}~{\em 75}, 464--480.

\bibitem[\protect\citeauthoryear{Arrow and Hurwicz}{Arrow and
  Hurwicz}{1972}]{arrowhurwicz}
Arrow, K.~J. and L.~Hurwicz (1972).
\newblock An optimality criterion for decision-making under ignorance.
\newblock In C.~F. Carter and J.~L. Ford (Eds.), {\em Uncertainty and
  Expectations in Economics}, pp.\  1--30. Oxford: Basil Blackwell.

\bibitem[\protect\citeauthoryear{Boehm}{Boehm}{1976}]{boehm}
Boehm, G. A.~W. (1976).
\newblock {One Fervent Vote against Wintergreen. Unpublished manuscript}.

\bibitem[\protect\citeauthoryear{Brams}{Brams}{1983}]{brams}
Brams, S.~J. (1983).
\newblock Comparison voting.
\newblock In S.~J. Brams, W.~Lucas, and P.~D. {Straffin Jr.} (Eds.), {\em
  Political and Related Models}, pp.\  32--65. New York: Springer.

\bibitem[\protect\citeauthoryear{Cahan and Slinko}{Cahan and
  Slinko}{2016}]{cahanslinko}
Cahan, D. and A.~Slinko (2016).
\newblock Nonconvergent electoral equilibria under scoring rules: Beyond
  plurality.
\newblock {\em Journal of Public Economic Theory (forthcoming)\/}.
\newblock in print.

\bibitem[\protect\citeauthoryear{Cox}{Cox}{1985}]{cox85}
Cox, G.~W. (1985).
\newblock Electoral equilibrium under approval voting.
\newblock {\em American Journal of Political Science\/}~{\em 29}, 112--118.

\bibitem[\protect\citeauthoryear{Cox}{Cox}{1987}]{cox1}
Cox, G.~W. (1987).
\newblock Electoral equilibrium under alternative voting institutions.
\newblock {\em American Journal of Political Science\/}~{\em 31}, 82--108.

\bibitem[\protect\citeauthoryear{Denzau, Kats, and Slutsky}{Denzau
  et~al.}{1985}]{denzaukatsslutsky}
Denzau, A., A.~Kats, and S.~Slutsky (1985).
\newblock Multi-agent equilibria with market share and ranking objectives.
\newblock {\em Social Choice and Welfare\/}~{\em 2}, 96--117.

\bibitem[\protect\citeauthoryear{Downs}{Downs}{1957}]{downs}
Downs, A. (1957).
\newblock An economic theory of political action in a democracy.
\newblock {\em The Journal of Political Economy\/}~{\em 65}, 135--150.

\bibitem[\protect\citeauthoryear{Eaton and Lipsey}{Eaton and
  Lipsey}{1975}]{eatonlipsey}
Eaton, C.~B. and R.~G. Lipsey (1975).
\newblock The principle of minimum differentiation reconsidered: Some new
  developments in the theory of spatial competition.
\newblock {\em The Review of Economic Studies\/}~{\em 42}, 27--49.

\bibitem[\protect\citeauthoryear{Felsenthal}{Felsenthal}{1989}]{felsenthal}
Felsenthal, D.~S. (1989).
\newblock On combining approval with disapproval voting.
\newblock {\em Systems Research and Behavioral Science\/}~{\em 34}, 53--60.

\bibitem[\protect\citeauthoryear{Garc\'ia-Lapresta, Marley, and
  Mart\'inez-Panero}{Garc\'ia-Lapresta et~al.}{2010}]{garciamartinez}
Garc\'ia-Lapresta, J.~L., A.~A.~J. Marley, and M.~Mart\'inez-Panero (2010).
\newblock Characterizing best-worst voting systems in the scoring context.
\newblock {\em Social Choice and Welfare\/}~{\em 34}, 487--496.

\bibitem[\protect\citeauthoryear{Grofman and Lijphart}{Grofman and
  Lijphart}{1986}]{grofmanlijphart}
Grofman, B. and A.~Lijphart (1986).
\newblock {\em Electoral Laws and Their Political Consequences}.
\newblock New York: Agathon Press.

\bibitem[\protect\citeauthoryear{Hotelling}{Hotelling}{1929}]{hotelling}
Hotelling, H. (1929).
\newblock Stability in competition.
\newblock {\em Economic Journal\/}~{\em 39}, 41--59.

\bibitem[\protect\citeauthoryear{Joy and McMunigal}{Joy and
  McMunigal}{2016}]{joymcmunigal}
Joy, P.~A. and K.~C. McMunigal (2016).
\newblock Racial discrimination and jury selection.
\newblock Washington University in St. Louis, Legal Studies Research Paper
  Series. Paper No. 16-07-08.

\bibitem[\protect\citeauthoryear{Kang}{Kang}{2010}]{kang}
Kang, M.~S. (2010).
\newblock Voting as veto.
\newblock {\em Michigan Law Review\/}~{\em 108}, 1221--1281.

\bibitem[\protect\citeauthoryear{Laslier and Maniquet}{Laslier and
  Maniquet}{2010}]{lasliermaniquet}
Laslier, J.-F. and F.~Maniquet (2010).
\newblock {Classical Electoral Competition under Approval Voting}.
\newblock In J.-F. Laslier and M.~R. Sanver (Eds.), {\em Handbook on Approval
  Voting}, pp.\  415--430. New York: Springer.

\bibitem[\protect\citeauthoryear{Leef}{Leef}{2014}]{leef}
Leef, G. (2014, October 28).
\newblock Unhappy with a hold your nose election? {Y}ou should be allowed to
  vote against candidates.
\newblock Forbes.

\bibitem[\protect\citeauthoryear{Marley and Louviere}{Marley and
  Louviere}{2005}]{marleylouviere}
Marley, A. A.~J. and J.~J. Louviere (2005).
\newblock Some probabilistic models of best, worst, and best-worst choices.
\newblock {\em Journal of Mathematical Psychology\/}~{\em 49}, 464--480.

\bibitem[\protect\citeauthoryear{Myerson and Weber}{Myerson and
  Weber}{1993}]{myerson4}
Myerson, R. and R.~J. Weber (1993).
\newblock A theory of voting equilibria.
\newblock {\em American Political Science Review\/}~{\em 87}, 102--114.

\bibitem[\protect\citeauthoryear{Myerson}{Myerson}{1999}]{myerson1}
Myerson, R.~B. (1999).
\newblock Theoretical comparisons of electoral systems.
\newblock {\em European Economic Review\/}~{\em 43}, 671--697.

\bibitem[\protect\citeauthoryear{Renwick and Pilet}{Renwick and
  Pilet}{2016}]{renwickpilet}
Renwick, A. and J.-B. Pilet (2016).
\newblock {\em Faces on the Ballot: The Personalization of Electoral Systems in
  Europe}.
\newblock Oxford: Oxford University Press.

\bibitem[\protect\citeauthoryear{Stigler}{Stigler}{1972}]{stigler}
Stigler, G.~J. (1972).
\newblock Economic competition and political competition.
\newblock {\em Public Choice\/}~{\em 13}, 91--106.

\end{thebibliography}

\end{document}